\def\mid{{\,|\,}}
\def\res{{\rm res}}
\let\set\mathbbm
\def\ord{\mbox{ord }}
\def\ord{{\rm ord}}
\DeclareMathOperator{\lclm}{lclm}
\DeclareMathOperator{\dres}{dres}
\newcommand{\bN} {\mathbb{N}}
\newcommand{\bC} {\mathbb{C}}
\newcommand{\bQ} {\mathbb{Q}}
\newcommand{\bZ} {\mathbb{Z}}
\newtheorem{thm}{Theorem}[section]
\newtheorem{lem}[thm]{Lemma}
\newtheorem{definition}[thm]{Definition}
\newtheorem{exam}[thm]{Example}
\makeatletter \@addtoreset{equation}{section}
\def\BibTeX{{\rm B\kern-.05em{\sc i\kern-.025em b}\kern-.08emT\kern-.1667em\lower.7ex\hbox{E}\kern-.125emX}}
\begin{document}

%
% The "title" command has an optional parameter, allowing the author to define a "short title" to be used in page headers.
\title{Non-minimality of Minimal Telescopers \\Explained by Residues
\thanks{S.\ Chen, X.\ Li, and Y.\ Wang were partially supported by the National Key R\&D Program of China (No.\ 2020YFA0712300 and No.\ 2023YFA1009401), the NSFC grant (No.\ 12271511), the CAS Project for Young Scientists
in Basic Research (No.\ YSBR-034), the CAS Funds of the Youth Innovation Promotion Association (No.\ Y2022001), and the Strategic Priority Research Program of the Chinese Academy of Sciences (No.\ XDB0510201).
M.\ Kauers, C.\ Koutschan, X.\ Li, and Y.\ Wang were supported by the Austrian FWF grant 10.55776/I6130. M.\ Kauers was also supported by the Austrian FWF grants 10.55776/PAT8258123 and 10.55776/PAT9952223. R.-H.\ Wang was supported by the NSFC grant (No.\ 12101449). 
All authors also were supported by the International Partnership Program of Chinese Academy of Sciences (Grant No.\ 167GJHZ2023001FN).
}}

\author{
Shaoshi Chen$^{a, b}$, Manuel Kauers$^{c}$, Christoph Koutschan$^{d}$\\
\bigskip
Xiuyun Li$^{a, b, c}$, Rong-Hua Wang$^{e}$, and Yisen Wang$^{a,b,d}$\\
$^a$KLMM,\, Academy of Mathematics and Systems Science, \\ Chinese Academy of Sciences, Beijing, 100190, (China)\\
$^b$School of Mathematical Sciences, \\University of Chinese Academy of Sciences,\\ Beijing 100049, (China)\\
$^c$Institute for Algebra, Johannes Kepler University\\ Linz A-4040, (Austria)\\
$^d$RICAM, Austrian Academy of Sciences\\ Linz A-4040, (Austria)\\
$^e$School of Mathematical Sciences, Tiangong University\\  Tianjin, 300387, (China)\\
{\sf schen@amss.ac.cn, manuel.kauers@jku.at}\\
{\sf lixiuyun@amss.ac.cn, christoph.koutschan@oeaw.ac.at}\\
{\sf  wangronghua@tiangong.edu.cn,  wangyisen@amss.ac.cn}
}

\maketitle

\begin{abstract}
Elaborating on an approach recently proposed by Mark van Hoeij, we 
continue to investigate why creative telescoping occasionally fails
to find the minimal-order annihilating operator of a given definite
sum or integral. We offer an explanation based on the consideration
of residues. 
\end{abstract}

% \ccsdesc[500]{Computing methodologies~Algebraic algorithms}

% \keywords{Creative telescoping, residues, symbolic integration, symbolic summation}
%
% A "teaser" image appears between the author and affiliation information and the body
% of the document, and typically spans the page.
%\begin{teaserfigure}
%  \includegraphics[width=\textwidth]{sampleteaser}
%  \caption{Seattle Mariners at Spring Training, 2010.}
%  \Description{Enjoying the baseball game from the third-base seats. Ichiro Suzuki preparing to bat.}
%  \label{fig:teaser}
%\end{teaserfigure}
%
% This command processes the author and affiliation and title information and builds
% the first part of the formatted document.
\maketitle
%\balance

\def\id{\operatorname{id}}
\let\set\mathbb

\section{Introduction}

Creative telescoping is the standard approach to definite summation and integration in computer algebra.
Its purpose is to find an annihilating operator for a given definite sum $\sum_k f(n,k)$ or a given definite
integral $\int_\Omega f(x,y)dy$.

Such operators are obtained from annihilating operators of the summand or integrand that have a particular
form. In the case of summation, suppose that we have
\begin{alignat}1\label{eq:telescoper1}
(L - (S_k-1)Q)\cdot f(n,k) = 0
\end{alignat}
for some operator $L$ that only involves $n$ and the shift operator $S_n$ but neither $k$ nor the shift operator~$S_k$,
and another operator $Q$ that may involve any of $n,k,S_n,S_k$.
Summing the equation over all $k$ yields
\[
  L\cdot\sum_k f(n,k) = \bigl[Q\cdot f(n,k)\bigr]_{k=-\infty}^\infty.
\]
If the right-hand side happens to be zero, we find that $L$ is an annihilating operator for the sum. 

In the case of integration, having
\begin{alignat}1\label{eq:telescoper2}
(L - D_yQ)\cdot f(x,y)=0
\end{alignat}
for some operator $L$ that only involves $x$ and the derivation $D_x$ but neither $y$ nor the derivation~$D_y$,
and some other operator $Q$ that may involve any of $x,y,D_x,D_y$, implies the equation
\[
 L\cdot\int_\Omega f(x,y)\,dy = \bigl[Q\cdot f(x,y)\bigr]_{\Omega}.
\]
If the right-hand side happens to be zero, we find that $L$ is an annihilating operator for the integral.

An operator $L$ as in equations \eqref{eq:telescoper1} and~\eqref{eq:telescoper2} is called 
a \emph{telescoper} for~$f$, and $Q$ is called a \emph{certificate} for~$L$. The degree of $S_n$ or $D_x$ in $L$ is called the order of~$L$.
If $L$ is such that there is no telescoper of lower order, then $L$ is called a \emph{minimal telescoper}.
The minimal telescoper is unique up to multiplication by rational functions (from the left).

Algorithms for testing the existence of telescopers and computing them if they exist meanwhile have a long history 
in computer algebra, see~\cite{abramov02a,abramov03,kauers23,PWZbook1996,Zeilberger1990c,Zeilberger1991}
for classical results and recent developments on the matter.
In his recent paper~\cite{VANHOEIJ2025102342}, van Hoeij proposed a fresh view on creative telescoping.
He explains why a telescoper can often be written as a least common left multiple of
smaller operators, and why the minimal telescoper is sometimes not the minimal-order annihilating 
operator for the sum or integral under consideration.

Let $C$ be a field of characteristic zero and $C(n, k)$ be the field of rational functions in $n, k$ over~$C$. Let $A_{n,k}=C(n,k)\langle S_n, S_k\rangle$ be the ring
of all linear recurrence operators in $S_n, S_k$ with rational function coefficients, and $A_n=C(n)\langle S_n\rangle$ be the subalgebra consisting of all operators that do not involve $k$ or~$S_k$. For a given summand $f(n,k)$, consider the $A_n$-module $\Omega:= A_{n,k}\cdot f(n,k)$
and the quotient module $M:=\Omega/((S_k-1)\Omega)$. An operator $L\in A_n$ is then a
telescoper for $H=f(n,k)$ if and only if it is an annihilating operator of the image $\overline H$ of
$H$ in~$M$.

In this setting, van Hoeij makes the following observations:
\begin{itemize}
\item If $M$ can be written as a direct sum of submodules, say $M=M_1\oplus M_2$, 
  then the minimal telescoper for $H$ is the least common left multiple of the
  minimal annihilating operators of the projections $\pi_1(H)$ and $\pi_2(H)$ of $H$
  in $M_1$ and~$M_2$, respectively.
\item If, moreover, the definite sum whose summand corresponds to $\pi_1(H)$ happens
  to be zero identically, then every annihilating operator of $\pi_2(H)$ is already
  an annihilating operator of the definite sum over~$H$, even though it may not be
  a telescoper for~$H$.
\end{itemize}
In order to take advantage of the second observation, it is necessary to understand
under which circumstances a definite sum can be zero. Such ``vanishing sums'' are 
themselves examples when a minimal telescoper fails to be a minimal annihilator.
For example, we have $\sum_k (-1)^k\binom {2n+1}k^2=0$, so the minimal annihilator is~$1$.
However, the minimal telescoper of $(-1)^k\binom{2n+1}k^2$ is $L=(2n+3)S_n+(8n+8)$.
Note that since $L$ is irreducible, the module~$M$, which is isomorphic to $A_n/\langle L\rangle$, has no
nontrivial submodules.

We propose an explanation of why certain sums are identically zero which is based
on the investigation of residues. Also based on residues, we will explain why telescopers tend to be 
least common left multiples. 
We are not the first to use residues in the context of creative telescoping.
For rational functions and algebraic functions in the differential case, it was observed
by Chen, Kauers, and Singer~\cite{Chen2012} that telescopers and residues are closely related.
Chen and Singer also used residues in the context of summation problems~\cite{ChenSinger2012}.
Residues are also tied to creative telescoping through the equivalence of extracting
residues with taking diagonals and positive parts and the computation of Hadamard
products~\cite{bostan16b}.
We are also not the first to study the non-minimality of telescopers. Besides van Hoeij's recent work~\cite{VANHOEIJ2025102342}, 
the problem was investigated by Paule~\cite{Paule1994} who proposed the method of creative symmetrizing. This method was further developed~\cite{PauleRiese1997,PauleSchorn1995} and enhanced by incorporating contiguous relations~\cite{Paule2021}.
An approach reminiscent to van Hoeij's ideas already appeared in a technical report by Chyzak~\cite{Chyzak2000ANM}. By translating multiple binomial sums to rational integrals, Bostan, Lairez, and Salvy~\cite{BostanLairezSalvy16} approach the non-minimality problem by a technique they call geometric reduction.

\section{Residues and Telescopers for Rational Functions} \label{SECT:residuerat}
Residues have played an important role in rational integration and summation~\cite{Arreche2024,Arreche2022,BronsteinBook,ChenSinger2012, Matusevich2000}. In this section, we will first use residues in the continuous setting to explain why minimal telescopers may not always lead to minimal annihilators for integrals and then use residues in the discrete setting to explain some vanishing sums. 

\subsection{The integration case}\label{SUBSECT:ratint}

Let $F=C(x)$, so that the bivariate rational function field $C(x,y)$ can be viewed
as a univariate rational function field~$F(y)$.
An element $f$ of $F(y)$ is said to be \emph{integrable in $F(y)$} if $f = D_y(g)$ for some $g\in F(y)$. 

Any rational function $f=a/b\in F(y)$ with $a, b\in F[y]$ and $\gcd(a, b)=1$ can be uniquely written as
\[f = p + \sum_{i=1}^n \sum_{j=1}^{m_i} \frac{\alpha_{i,j}}{(y-\beta_i)^j},\]
where $p\in F[y], n, m_i\in \bN, \alpha_{i, j}, \beta_i\in \overline{F}$, and the $\beta_i$'s are distinct roots of $b$. Note that all the $\alpha_{i, j}$'s are in the field $F(\beta_1, \ldots, \beta_n)$. 
The value $\alpha_{i, 1}\in \overline{F}$ is called the \emph{residue} (in~$y$) of $f$ at~$\beta_i$, denoted by $\res_y(f, \beta_i)$. Let $P, Q\in F[y]$ be such that $\gcd(P, Q)=1$ and $Q$ is squarefree and let $\beta\in \overline{F}$ be a zero of $Q$. Then we have Lagrange's residue formula
\[\res_y\left(\frac{P}{Q}, \beta\right) = \frac{P(\beta)}{D_y(Q)(\beta)}.\]

It is well-known that a rational function is integrable in $F(y)$ if and only if all its residues in $y$ are zero (see~\cite[Proposition~2.2]{ChenSinger2012}). So residues are the obstruction to the integrability in $F(y)$. From this fact and the commutativity between the derivation in $x$ and taking the residue in~$y$, we have that the minimal telescoper of a rational function in $C(x, y)$ is the least common left multiple of the minimal annihilating operators of its residues in $y$ which are algebraic functions in~$\overline{C(x)}$ (see~\cite[Theorem~6]{Chen2012}).

Now consider the integral
\[
  I(x):= \int_{-\infty}^\infty f(x, y)\, dy \quad \text{with $f := \frac1{y^4+xy^2+1}$ and $x>2$}.
\]
We have $I(x)=\pi/\sqrt{x+2}$, so the integral has the minimal annihilator
$(2x+4)D_x+1$. The minimal telescoper for $f$ however is $L=(4x^2-16)D_x^2+12xD_x+3$.
Let us see why the minimal telescoper overshoots in this example.

Let $\alpha,\beta\in\overline{\set Q(x)}$ be such that $\alpha,-\alpha,\beta,-\beta$
are the poles of~$f$ and $\beta=\alpha(\alpha^2+x)$. Then we have the residues
\[
\res_{y}(f, \pm\alpha) = \pm\frac{\alpha(2-x^2-\alpha^2x)}{2(x^2-4)}
\,\, \text{and}\,\,  
\res_{y}(f, \pm\beta) = \pm\frac{\alpha(2\alpha^2+x)}{2(x^2-4)}.
\]

Note that each of the four residues has the telescoper $L$ as its minimal annihilator.
This does not explain yet why the telescoper factors and overshoots. To explain
this, we need to observe that the sum $\res_y(f, \alpha)+\res_y(f, \beta)$
is annihilated by $(2x+4)D_x+1$. By the residue theorem, the sum of these residues
is equal (up to a multiplicative constant) to the following contour integral:
\begin{center}
  \begin{tikzpicture}[scale=.4]
    \draw[->](-5,0)--(5,0) node[below]{Re};
    \draw[->](0,-4)--(0,5) node[right]{Im};
    \draw
    (0,1)node{$\bullet$}node[right]{$\alpha$}
    (0,-1)node{$\bullet$}node[right]{$-\alpha$}
    (0,3)node{$\bullet$}node[right]{$\beta$}
    (0,-3)node{$\bullet$}node[right]{$-\beta$};
    \draw[thick] (-4,0)--(4,0) arc (0:180:4);
    \draw[thick,->] (0,0)--(1,0);
  \end{tikzpicture}
\end{center}
By increasing the contour indefinitely, we see that it is also the value of the real
integral $I(x)=\pi/\sqrt{x+2}$. 
As creative telescoping does not know the contour but only the integrand, it must return
a telescoper that works for every contour, in particular one that encircles only one of the
poles. For such a contour, the minimal telescoper is indeed the minimal annihilator.

In van Hoeij's language of submodules, translated to the differential case, consider
$\Omega=C(x,y)$, $M=\Omega/D_y\Omega$, and $A_x=C(x)\langle D_x\rangle$. The submodule 
generated by $f$ in $M$ is $N=\operatorname{span}_{C(x)}(f+D_y\Omega,y^2f+D_y\Omega)$. 
Note that $\dim_{C(x)}N=\ord(L)=2$. The module $N$ admits a decomposition
$N=N_+\oplus N_-$ where $N_{\pm} = \operatorname{span}_{C(x)}((1\pm y^2)f + D_y\Omega)$,
which suggests writing
\[
  f = \frac{1 + y^2}2 f + \frac{1 - y^2}2 f.
\]
Indeed, the minimal telescoper of $\frac{1+y^2}2 f$ is $(2x+4)D_x+1$, the minimal
telescoper of $\frac{1-y^2}2f$ is $(2x-4)D_x+1$, and $L$ is the least common left
multiple of these operators. Because of 
\[
\res_y(f, \alpha) = \res_y(y^2 f,\beta)
\quad\text{and}\quad
\res_y(f, \beta) = \res_y(y^2 f, \alpha),
\]
the residues of $\frac{1-y^2}2f$ at $\alpha$ and $\beta$ cancel each other, so
\[
  \int_{-\infty}^\infty \frac{1-y^2}2f\,dy=0,
\]
and that's why the factor $(2x-4)D_x+1$ of $L$ is not needed for~$I(x)$.

\subsection{The summation case} \label{SUBSECT:ratsum}
As a discrete analogue of residues for rational integration, discrete residues are introduced to study the summability problem and the existence problem of telescopers for rational functions in~\cite{ChenSinger2012}. Efficient algorithms for computing discrete residues and their variants are given in~\cite{Arreche2024,Arreche2022,ArrecheZhang2024}. 

Let $S_x$ and $S_y$ denote the usual shift operators of $C(x, y)$ with respect to $x$ and $y$, respectively. Let $\Delta_y$ denote the difference operator defined by $\Delta_y(r) = S_y(r)-r$ for any $r\in F(y)$.
A rational function $f\in F(y)$ is said to be \emph{summable in $F(y)$} if $f = \Delta_y(g)$ for some $g\in F(y)$.  For any elements $\beta\in \overline{F}$, we call the set $\{\beta + i\mid i\in \bZ\}$ a \emph{$\bZ$-orbit} of $\beta$ in $\overline{F}$, denoted by $[\beta]_{\bZ}$. Any rational function $f \in F(y)$ can be decomposed into the form
 \[f = p + \sum_{i=1}^n \sum_{j=1}^{m_i} \sum_{\ell=0}^{d_{i,j}} \frac{\alpha_{i,j, \ell}}{(y-(\beta_i +\ell))^j},\]
 where $p \in F[y]$, $m,n_i,d_{i,j} \in \bN$, $\alpha_{i,j,\ell}, \beta_i \in \overline{F}$, and the $\beta_i$'s are in distinct $\bZ$-orbits. The sum $\sum_{\ell=0}^{d_{i,j}} \alpha_{i,j,\ell}$ is called the \emph{discrete residue in $y$} of $f$ at the $\bZ$-orbit $[\beta_i]_{\bZ}$ of multiplicity $j$, denoted by $\dres_y(f,[\beta_i]_{\bZ},j)$. 
By Proposition 2.5 in~\cite{ChenSinger2012}, discrete residues are the precise obstruction for rational functions to be summable, i.e.,  $f \in F(y)$ is summable in $F(y)$ if and only if all of the discrete residues of $f$ are zero. 

We recall a very old result due to Nicole~\cite{Tweedie1917} that describes a family of summable rational functions and then use this result to explain some vanishing sums.
The idea behind this theorem has become part of the classical summation folklore and also explained, for example, in Section~5.3 of~\cite{GKP1994}.

\begin{lem}[Nicole, 1717]\label{THM:nicole}
Let $n \geq 2$ be an integer and $P\in F[y]$ be such that~$\deg_y(P) \leq n-2$. Then the rational function
\[f = \frac{P(y)}{(y+\beta_1)\cdots (y+\beta_n)} \]
is summable in $F(y)$ for all $\beta_i\in \overline{F}$ with $\beta_i-\beta_j\in \bZ\setminus\{0\}$ for $i\neq j$.
\end{lem}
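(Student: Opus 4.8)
The plan is to invoke the criterion recalled just before the statement: by Proposition~2.5 of~\cite{ChenSinger2012}, a rational function $f\in F(y)$ is summable in $F(y)$ if and only if all of its discrete residues in $y$ vanish. The whole task therefore reduces to writing down the discrete-residue decomposition of $f$ and checking that every discrete residue is zero.

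First I would pin down the pole structure forced by the hypotheses. Since $\beta_i-\beta_j\in\bZ\setminus\{0\}$ for $i\neq j$, the poles $-\beta_1,\dots,-\beta_n$ of $f$ are pairwise distinct, and they all lie in the single $\bZ$-orbit $[-\beta_1]_{\bZ}$, because $(-\beta_i)-(-\beta_1)=\beta_1-\beta_i\in\bZ$. Moreover $\deg_y P\le n-2<n=\deg_y\bigl((y+\beta_1)\cdots(y+\beta_n)\bigr)$, so $f$ is a proper fraction (its polynomial part is $0$) whose denominator is squarefree. Hence the partial-fraction decomposition of $f$ is simply
\[
  f = \sum_{i=1}^n \frac{\alpha_i}{y+\beta_i}, \qquad \alpha_i = \res_y(f,-\beta_i)\in\overline F,
\]
with every pole simple. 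In the language of the discrete-residue decomposition this means that the only orbit carrying poles is $[-\beta_1]_{\bZ}$ and only at multiplicity one; consequently the \emph{only} discrete residue of $f$ that can possibly be nonzero is
\[
  \dres_y\bigl(f,[-\beta_1]_{\bZ},1\bigr) = \sum_{i=1}^n \alpha_i.
\]

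The heart of the argument is to show that this single sum vanishes, and this follows purely from the degree bound. Expanding each simple fraction at $y=\infty$ gives $\alpha_i/(y+\beta_i)=\alpha_i/y+O(1/y^2)$, so the coefficient of $1/y$ in the Laurent expansion of $f$ at infinity equals $\sum_{i=1}^n\alpha_i$. On the other hand, because $\deg_y P\le n-2$, the rational function $f=P/Q$ satisfies $f=O(1/y^2)$ as $y\to\infty$, so that coefficient is $0$. Thus $\sum_{i=1}^n\alpha_i=0$, the unique discrete residue of $f$ vanishes, and $f$ is summable by the criterion above. Equivalently, the last step is the statement that the residue of $f$ at infinity is zero, which holds precisely when $\deg_y P\le\deg_y Q-2$.

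The step I expect to require the most care is not a calculation but the bookkeeping in the second paragraph: one must use the hypothesis $\beta_i-\beta_j\in\bZ\setminus\{0\}$ in two distinct ways—the ``$\neq 0$'' part guarantees the poles are distinct, so all multiplicities equal one and no confluent term $1/(y+\beta_i)^j$ with $j\ge 2$ appears, while the ``$\in\bZ$'' part collapses all poles into one $\bZ$-orbit—so that the discrete-residue decomposition really contributes a single term. Once the pole structure is established, the vanishing $\sum_{i=1}^n\alpha_i=0$ is immediate from the degree condition.
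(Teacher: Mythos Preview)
Your proof is correct and follows essentially the same route as the paper: partial-fraction decomposition, reduction via the Chen--Singer summability criterion to the vanishing of the single discrete residue $\sum_i\alpha_i$, and then a degree argument. The only cosmetic difference is in that last step: the paper clears denominators and reads off $\sum_i\alpha_i$ as the coefficient of $y^{n-1}$ in~$P$, while you extract it as the coefficient of $1/y$ in the Laurent expansion at infinity; these are the same computation viewed from opposite ends, and the paper in fact remarks just after its proof that the residue-at-infinity viewpoint gives the same conclusion.
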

\begin{proof}
By partial fraction decomposition, we get
\begin{equation}\label{EQ:sumf}
f = \sum_{i=1}^n \frac{\alpha_i}{y+\beta_i}, \quad \text{where $\alpha_i\in \overline{F}$.}
\end{equation}
Note that the $\beta_i$'s are in the same $\bZ$-orbit. By Proposition 2.5 in~\cite{ChenSinger2012},
$f$ is summable in $F(y)$ if and only if the sum $\sum_{i=1}^n \alpha_i$ is zero. 
By normalizing $f$ in~\eqref{EQ:sumf}, we get
\[P = (\alpha_1+\cdots+ \alpha_n) y^{n-1} + \text{terms with degree lower than $n-1$}.\]
Since~$\deg_y(P) \leq n-2$, it holds that~$\sum_{i=1}^{n} \alpha_i =0$. 
\end{proof}

When $F$ is the field of complex numbers,  the
identity $\sum_{i=1}^{n} \alpha_i =0$ also follows from Cauchy's residue 
theorem since the residue of $f$ at infinity is zero. 

As a corollary of Nicole's lemma, we obtain a class of vanishing sums.
For any polynomial~$P\in F[y]$ with $\deg_y(P)\leq n-1$, we consider the rational function
\[f = \frac{P(y)}{ y(y+1)\cdots (y+n)} = \sum_{k=0}^{n} \frac{\alpha_k}{y+k},\]
which is summable in $F(y)$ by Nicole's lemma. Since the denominator of $f$ is squarefree, Lagrange's residue formula implies that
\[\alpha_k = \frac{(-1)^k P(-k)}{ k! (n-k)!}.\]
Then we have the vanishing sum
\[\sum_{k=0}^n \frac{(-1)^k P(-k)}{ k! (n-k)!} = 0.\]
\begin{exam}\label{EX:euler} 
To show the combinatorial identity
\[\sum_{k=0}^n (-1)^k \binom{n}{k}k^{j}  = 0, \quad \text{where $ n\geq 2$ and $0 \leq j < n$,} \]
we consider the rational function
\[f =\frac{P}{Q} =  \frac{n!(-y)^j}{y(y+1)\cdots (y+n)} = \sum_{k=0}^n \frac{\alpha_k}{y+k}.\]
By Lagrange's residue formula, we have 
\[\alpha_k = (-1)^k \binom{n}{k}k^{j}. \]
Since $0 \leq j < n$, we have $\deg_y(P) \leq \deg_y(Q) -2$. Then the identity $\sum_{k=0}^{n} \alpha_k = 0$ holds. 
\end{exam}

\begin{exam}\label{EX:vanisum}
To show the combinatorial identity
\[\sum_{k=0}^n \binom{2k}{k} \binom{2n-2k}{n-k} \frac{1}{2k-1} = 0, \quad \text{where $n\geq 1$},\]
we consider the rational function
\[f =\frac{P}{Q} = -\frac{2^n \prod_{i= 1}^{n-1} (2(y+i)+1)}{y(y+1)\cdots (y+n)} = \sum_{k=0}^n \frac{\alpha_k}{y+k}.\]
 By Lagrange's residue formula, we get
\[\alpha_k = \binom{2k}{k} \binom{2n-2k}{n-k} \frac{1}{2k-1}.\]
Since $\deg_y(P) = n-1$ and  $\deg_y(Q) = n+1$,  Nicole's lemma implies the identity $\sum_{k=0}^{n} \alpha_k = 0$.
\end{exam}
We will see more applications of Nicole's lemma in Section~\ref{SUBSECT:zerosum}.

\section{Residual Forms and Prescopers for Hypergeometric Terms} \label{SECT:residuehyper}

We now focus on creative telescoping for hypergeometric terms. We will use residual forms introduced in~\cite{ChenHuangKauersLi}
to construct submodules in order to find right factors of minimal telescopers and then investigate the automorphisms 
and the non-minimality phenomenon of minimal telescopers for hypergeometric sums.  These studies continue the development of
the submodule approach initialized by van Hoeij~\cite{VANHOEIJ2025102342}.

To be more compatible with the customary usage, we will now use $n$ and $k$ instead of $x$ and $y$, respectively.
A sequence $H(n, k)$ is called a~\emph{hypergeometric term} over $C(n,k)$ with respect to $n$ and $k$ if the two shift quotients 
$S_n(H)/H$ and $S_k(H)/H$
are rational functions in $C(n,k)$. A hypergeometric term $H$ is said to be~\emph{hypergeometric summable} in $k$ if $H = \Delta_k(G)$
for some hypergeometric term $G$.
A nonzero linear operator $L \in C(n)\langle S_n \rangle $ is called a \emph{telescoper} for $H$ if there exists
another hypergeometric term $G(n, k)$ such that
\begin{equation} \label{EQ:telescoping}
L(H(n,k))=\Delta_k(G(n,k)).
\end{equation}
Recall that~$p\in C(n)[k]$ is \emph{shift-free} in $k$ if~$\gcd(p,S_k^{i}(p))=1$ for all~$i \in \bZ \setminus \{0\}$.
A rational function $f=a/b \in C(n, k)$ is~\emph{shift-reduced} 
in $k$ if $\gcd(a, S_k^{i}(b)) = 1$ for all $i \in \bZ$. A nonzero polynomial $p \in C(n)[k]$ is~\emph{strongly prime} 
with a rational function~$f = a/b$ if $\gcd(p, S_k^{-i}(a))=\gcd(p, S_k^{i}(b))=1$ for all $i \in \bN$.
By computing rational normal forms as in~\cite{AbramovPetkovsek2002b}, one can 
write $f\in C(n, k)$ as
\begin{equation}\label{EQ:ratNF}
  f =  \frac{S_k(S)}{S}\cdot K,
\end{equation}
where $S, K \in C(n, k)$ such that $K$ is shift-reduced in~$k$.
The rational functions $K$ and $S$ are called \emph{kernel} and \emph{shell} of~$f$, respectively. 
Let $f = S_k(H)/H$. Then $H = S\cdot H_0$ with $S_k(H_0)/H_0 = K$.  Write $K =u/v$ with $u, v\in C(n)[k]$ and $\gcd(u, v)=1$. 
Let $\phi_K\colon C(n)[k] \rightarrow C(n)[k]$ be a $C(n)$-linear map defined by 
\[\phi_K(p) = uS_k(p) - vp\quad \text{for all $p \in C(n)[k]$}.\]
Let $W_K$ be the standard complement of the image $\text{im}(\phi_K)$ in $C(n)[k]$ such that
$C(n)[k] = \text{im}(\phi_K) \oplus W_K$. 
By the modified Abramov--Petkov\v{s}ek reduction~\cite{ChenHuangKauersLi}
we can decompose $H$ into 
\begin{equation}\label{EQ:adddecomp}
  H = \Delta_k(r \cdot H_0) + \left(\frac{a}{b} + \frac{p}{v}\right) H_0
\end{equation}
where $r\in C(n, k), p\in W_K$, and $a, b\in C(n)[k]$ such that $\deg_k(a)<\deg_k(b)$, $\gcd(a,b)=1$, and $b$ is shift-free in $k$ and strongly prime with $K$.
By Proposition~4.7 and Theorem~4.8 in~\cite{ChenHuangKauersLi}, we have $W_K$ is finite-dimensional over $C(n)$
and $H$ is hypergeometric summable in $k$ if and only if $a = 0$ and $p=0$. So the form $(a/b+p/v)H_0$ is the obstruction
to the hypergeometric summability.  For this reason, we call $(a/b+p/v)H_0$ a \emph{residual form} of $H$ with respect to $\Delta_k$.

Let $\Omega$ be the $A_{n}$-module $C(n, k)\cdot H$.
Note that $\Delta_k(\Omega)$ is an $A_n$-submodule of $\Omega$. Let $M$ denote the quotient module $\Omega/\Delta_k(\Omega)$. An operator $L\in A_n$
is a telescoper for $H$ if and only if $L$ is an annihilator of the image $\overline{H}$ of $H$ in~$M$.

\begin{lem}\label{THM:submoduleN}
    Let $H_0$ and $v$ be defined as in~\eqref{EQ:adddecomp} and let
    \[N := \left\lbrace \frac{p}{v} H_0 + \Delta_k(\Omega) \ \middle|\ p \in W_K  \right\rbrace.\]
    Then $N$ is an $A_n$-submodule of~$M$.
\end{lem}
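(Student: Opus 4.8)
The plan is to verify the two properties that make a subset of the $A_n$-module $M$ a submodule: that $N$ is a $C(n)$-subspace, and that $N$ is stable under $S_n$. These suffice because $A_n=C(n)\langle S_n\rangle$ is generated by $C(n)$ together with $S_n$. The first property is immediate: $W_K$ is a $C(n)$-subspace of $C(n)[k]$, and since cosets add and scale through their representatives, $\frac{p_1}{v}H_0+\frac{p_2}{v}H_0=\frac{p_1+p_2}{v}H_0$ and $c\,\frac{p}{v}H_0=\frac{cp}{v}H_0$ for $c\in C(n)$ again lie in $N$. Before treating $S_n$ it is convenient to simplify the description of $N$. A direct computation gives $\Delta_k(qH_0)=\frac{\phi_K(q)}{v}H_0$ for all $q\in C(n)[k]$, so $\frac{\im(\phi_K)}{v}H_0\subseteq\Delta_k(\Omega)$; combined with $C(n)[k]=\im(\phi_K)\oplus W_K$ this shows that every $\frac{q}{v}H_0$ is congruent modulo $\Delta_k(\Omega)$ to some $\frac{p}{v}H_0$ with $p\in W_K$. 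Hence $N$ is exactly the image in $M$ of $V:=\frac{1}{v}\,C(n)[k]\cdot H_0$, and it suffices to prove $S_n(V)\subseteq V+\Delta_k(\Omega)$.

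Next I would compute the action on a generator. Set $g:=S_n(H_0)/H_0$, which lies in $C(n,k)$ because $H_0$ is hypergeometric in $n$ as well as in $k$ (indeed $H_0=H/S$ with $S\in C(n,k)$). Then
\[
  S_n\!\Bigl(\tfrac{q}{v}H_0\Bigr)=\tfrac{S_n(q)}{S_n(v)}\,S_n(H_0)=\tfrac{S_n(q)\,g}{S_n(v)}\,H_0 .
\]
Applying the modified Abramov--Petkov\v{s}ek reduction of \cite{ChenHuangKauersLi} to $\frac{S_n(q)g}{S_n(v)}H_0$ produces a residual form $\bigl(\frac{a}{b}+\frac{p}{v}\bigr)H_0$ with $b$ shift-free and strongly prime with $K$ and $p\in W_K$. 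The whole content of the lemma is then the assertion that the obstruction away from the kernel vanishes, i.e.\ that $a=0$; once this holds the reduced element equals $\frac{p}{v}H_0\in N$, so $S_n(V)\subseteq V+\Delta_k(\Omega)$ and therefore $S_n(N)\subseteq N$.

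To force $a=0$ I would show that every pole in $k$ of the rational factor $R:=\frac{S_n(q)\,g}{S_n(v)}$ lies in a $\bZ$-orbit of a root of $u$ or of $v$. This is precisely what is needed: because $b$ is strongly prime with $K$, a pole of the input lying in such an orbit cannot survive in the $\frac{a}{b}$-part, and the reduction instead absorbs it into the kernel part $\frac{p}{v}H_0$ -- concretely, a pole at a root of a shift of $u$ is traded for a pole at a root of $v$ via $S_k(H_0)=\frac{u}{v}H_0$, exactly as in the computation $\frac{1}{k-n-1}\binom{n}{k}\equiv-\frac{1}{k+1}\binom{n}{k}\pmod{\Delta_k(\Omega)}$ for $H_0=\binom{n}{k}$. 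The poles of $R$ come from the factor $1/S_n(v)$, located at the roots of $S_n(v)$, and from the factor $g$, whose positions are governed by the compatibility relation arising from $S_nS_k=S_kS_n$,
\[
  \frac{S_k(g)}{g}=\frac{S_n(K)}{K}=\frac{S_n(u)\,v}{u\,S_n(v)} .
\]
Reading this orbit by orbit -- comparing the valuations of $g$ at consecutive points of each $\bZ$-orbit -- confines the zeros and poles of $g$ to orbits meeting the roots of $u$, $v$, $S_n(u)$, and $S_n(v)$.

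The main obstacle is this final localisation: ruling out that such an orbit is strongly prime with $K$, that is, showing that the roots of $S_n(u)$ and $S_n(v)$ again lie in the $\bZ$-orbits of the roots of $u$ and of $v$, respectively, so that $S_n$ permutes the finite family of kernel orbits among themselves. This is the integrality built into a bivariate hypergeometric term: in the Ore--Sato description the arguments of the Gamma factors depend on $n$ and $k$ with integer coefficients, so advancing $n$ by one carries each root of $u$ (resp.\ $v$) to another root of the same polynomial up to an integer translation; in the examples above this shows up as $\den(g)$ being a product of $S_k$-shifts of $u$ and $v$. I would isolate this as a separate structural lemma, deriving it from the Ore--Sato form and corroborating it by the balance, forced in each orbit by the compatibility relation, between the roots of $u$ and of $v$ that are gained and lost under $S_n$; the shift-reducedness of $K$ is what keeps these two sides from interfering. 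Granting the lemma, all poles of $R$ sit in orbits of roots of $u$ or $v$, the $\frac{a}{b}$-part is forced to vanish, and $N$ is an $A_n$-submodule of $M$.
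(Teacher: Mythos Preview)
The paper's proof is a one-line citation of \cite[Proposition~5.2]{Huang2016} with $b_0=1$, which directly asserts that $S_n^i\bigl(\tfrac{p}{v}H_0\bigr)\equiv\tfrac{p_i}{v}H_0\bmod\Delta_k(\Omega)$ for some $p_i\in W_K$. You are instead re-deriving this from first principles, and your reduction to the claim that all $k$-poles of $S_n\bigl(\tfrac{p}{v}H_0\bigr)/H_0$ lie in $\bZ$-orbits of roots of $u$ or $v$ is the right diagnosis of what must be shown.

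However, the argument is not complete as written. The step you yourself flag as ``the main obstacle'' --- that the $\bZ$-orbits of the roots of $S_n(u)$ and $S_n(v)$ are already orbits of roots of $u$ and $v$ --- is deferred rather than proved. Of your two proposed attacks, the Ore--Sato route is heavier than needed; the compatibility argument you mention second is the correct one and is short: summing the valuation relation $v_{S_k^{-1}p}(g)-v_p(g)=v_p(S_n(u))+v_p(v)-v_p(u)-v_p(S_n(v))$ over a $\bZ$-orbit telescopes the left side to~$0$, and since both $K$ and $S_n(K)$ are shift-reduced in~$k$, at most one of $\sum v_p(u),\sum v_p(v)$ and at most one of $\sum v_p(S_n(u)),\sum v_p(S_n(v))$ is nonzero in any orbit; matching signs then forces an orbit to meet $S_n(u)$ iff it meets~$u$, and similarly for~$v$. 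You should carry this out rather than postpone it. There is a second, smaller gap: ``strongly prime with $K$'' only rules out $S_k^i(v)$ ($i\ge0$) and $S_k^{-i}(u)$ ($i\ge0$) as factors of~$b$, so a pole of $R$ at, say, a root of $S_k^{-1}(v)$ is not excluded from~$b$ by that condition alone. What closes this is the shifting trick you allude to: iterating $f\equiv S_k^{\pm1}(f)\bmod\Delta_k(\Omega)$ together with $S_k(H_0)=\tfrac{u}{v}H_0$ moves every pole within a kernel orbit until it sits over an actual root of~$v$ and hence lands in the $\tfrac{p}{v}$-part. Making this precise is exactly the content of the proposition the paper cites.
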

    \begin{proof}
    By~\cite[Proposition~5.2]{Huang2016} with~$b_0 =1$, for any~$i \in \bN$,
    \[S_n^i\left(\frac{p}{v}H_0\right) \equiv \frac{p_i}{v} H_0 \bmod \Delta_k(\Omega)\]
    for some~$p_i \in W_K$. The lemma follows.
    \end{proof}
Note that $N$ is independent of the choice of~$S$ and~$K$ in the rational normal form \eqref{EQ:ratNF}. We will call $N$ a \emph{kernel submodule} of $M$ which is an $A_n$-submodule and a finite-dimensional vector space over $C(n)$. 
Recall that an operator $L$ is a telescoper for $H$ if it annihilates $\overline{H}$ in~$M$.
Therefore, if $N$ is any submodule of $M$, then for an operator $L$ to be a telescoper, it 
is \emph{necessary} that $L$ maps $\overline{H}$ into~$N$, although this condition is in
general \emph{not sufficient} for being a telescoper.
This observation motivates the following definition of prescopers for hypergeometric terms.
An analogous definition was introduced in~\cite[Section~6.2]{GeddesLeLi2004} for hyperexponential 
functions.
\begin{definition}\label{DEF:prescoper}
    A nonzero operator $R\in C(n)\langle S_n \rangle$ is called a \emph{prescoper} 
    for $H$ with respect to $k$ if~$R(H) + \Delta_k(\Omega) \in N$, i.e., there exists $p \in W_K$ such that 
    \[R(H) \equiv \frac{p}{v} H_0 \bmod \Delta_k(\Omega).  \]
A prescoper is said to be minimal if it has minimal degree in~$S_n$. 
\end{definition}
By definition, it is clear that telescopers are prescopers.  The next lemma shows that the minimal prescoper 
for~$H$ is a right factor of the minimal telescoper for~$H$ if they exist. 
\begin{lem}\label{LEM:rightfactor}
    Let~$N\subseteq M$ be $A_n$-modules and~$m \in M$. Suppose that $R\in A_n$ is the minimal annihilator for~$m+N \in M /N$ and $T$ is the minimal annihilator for~$R(m)$, then $T \cdot R$ is the minimal annihilator for~$m \in M$.
    \end{lem}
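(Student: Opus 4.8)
The plan is to work in the left principal ideal domain $A_n = C(n)\langle S_n\rangle$, using that the annihilator $\{L \in A_n : L(x) = 0\}$ of any element $x$ of an $A_n$-module is a principal left ideal whose generator is exactly the minimal annihilator of~$x$. Under this dictionary the assertion becomes the claim that $\operatorname{Ann}(m) = A_n\cdot(T\cdot R)$, and I would establish this by proving the two inclusions separately.

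The inclusion $A_n\cdot(T\cdot R) \subseteq \operatorname{Ann}(m)$ is immediate: since $R(m) \in N \subseteq M$ and $T$ annihilates $R(m)$ by hypothesis, we get $(T\cdot R)(m) = T(R(m)) = 0$. For the reverse inclusion I would take an arbitrary annihilator $L$ of $m$ and peel off the two factors in turn. Because $L(m) = 0 \in N$, the operator $L$ annihilates the image $m+N$ in $M/N$; minimality of $R$ then forces $R$ to be a right divisor of $L$, say $L = L'\cdot R$. Substituting back yields $0 = L(m) = L'(R(m))$, so $L'$ annihilates $R(m)$, and minimality of $T$ forces $T$ to be a right divisor of $L'$, say $L' = L''\cdot T$. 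Combining gives $L = L''\cdot T\cdot R \in A_n\cdot(T\cdot R)$, as required.

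The two inclusions together give $\operatorname{Ann}(m) = A_n\cdot(T\cdot R)$, so $T\cdot R$ is the minimal annihilator of~$m$. The only delicate point --- and the place where the argument could silently break --- is keeping the noncommutative divisibility straight: in the left PID $A_n$ the statement ``$g$ generates $\operatorname{Ann}(x)$'' means precisely ``$g$ is a right divisor of every annihilator of~$x$'', so the two reductions must be carried out in the correct order, first stripping $R$ off the right and only then $T$, so that the recovered factorization is $T\cdot R$ and not $R\cdot T$. No deeper input is needed; the result is a formal consequence of principality together with the two minimality hypotheses, and in particular does not use any special structure of $N$ beyond its being an $A_n$-submodule.
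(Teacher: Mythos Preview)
Your proof is correct and follows essentially the same route as the paper's: both first check that $T\cdot R$ annihilates~$m$, then take an arbitrary annihilator~$L$, use minimality of~$R$ in $M/N$ to factor $L=L'\cdot R$, and finally use minimality of~$T$ on $R(m)$ to factor $L'=L''\cdot T$. Your explicit framing via the left ideal $\operatorname{Ann}(m)=A_n\cdot(T\cdot R)$ and the remark about the order of stripping factors in the noncommutative setting are helpful clarifications, but the underlying argument is identical.
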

 \begin{proof}
We firstly observe that $T \cdot R$ is an annihilator for~$m \in M$. 
Let $L$ be any annihilator for~$m$. Then $L$ must be an annihilator for~$m+N \in M/N$, which implies that $L$ is right divisible by~$R$. 
Let~$L = \widetilde{L}\cdot R$, then $\widetilde{L}$ is an annihilator for~$R(m)$. By the minimality of~$T$, 
we have that $\widetilde{L}$ is right divisible by~$T$ and then $L$ is right divisible by~$T \cdot R$. As a consequence, we have that 
$T \cdot R$ is the minimal annihilator for~$m$.
\end{proof}

The following lemma will be used in the next sections to explore the LCLM structure of annihilators of elements in $A_n$-modules.
 
\begin{lem}\label{LEM:lclm}
Let~$M$ be an $A_n$-module and~$M = \bigoplus_{i=1}^n M_i$ be a direct-sum decomposition of $M$. For any element $m = m_1 + \cdots + m_n\in M$, 
the minimal annihilator for~$m$ is the least common left multiple of the minimal annihilators for the~$m_i$'s.
\end{lem}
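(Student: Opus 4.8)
The plan is to reduce the general direct-sum statement to the two-summand case already implicit in Lemma~\ref{LEM:rightfactor}, and then bootstrap by induction on~$n$. First I would observe that the projections $\pi_i\colon M\to M_i$ are $A_n$-module homomorphisms, so if $L$ annihilates $m$ then $L$ annihilates each $\pi_i(m)=m_i$; hence the minimal annihilator $L_i$ of $m_i$ right-divides $L$, and therefore the least common left multiple $\lclm(L_1,\dots,L_n)$ right-divides $L$. This shows that $\lclm(L_1,\dots,L_n)$ divides the minimal annihilator of~$m$.

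Next I would establish the reverse: that $\lclm(L_1,\dots,L_n)$ actually annihilates~$m$. Since $L_i$ annihilates $m_i$ and $L_i$ right-divides the lclm, we get that the lclm annihilates every $m_i$; because $m=m_1+\cdots+m_n$ and the action of $A_n$ is additive, the lclm annihilates~$m$. Combining the two directions, the minimal annihilator of~$m$ and $\lclm(L_1,\dots,L_n)$ divide each other, and since both can be normalized to be monic (or taken up to left multiplication by elements of $C(n)$), they coincide.

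In fact the cleanest route avoids any explicit induction and works with the lclm of all $n$ operators at once, using only the two facts just stated about the projection maps. The key structural input is that $A_n=C(n)\langle S_n\rangle$ is a (noncommutative) principal left-ideal domain, so that least common left multiples exist and the divisibility relation among left ideals is exactly captured by right-divisibility of generators; I would cite this standard fact rather than reprove it. The direct-sum hypothesis enters only through the existence of the well-defined projections, which is what guarantees that annihilating $m$ forces annihilating each component.

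The main obstacle is purely bookkeeping: making precise that ``minimal annihilator'' is well-defined up to a left factor in $C(n)$ and that the lclm is characterized by the property that an operator is a left multiple of all the $L_i$ if and only if it is a left multiple of $\lclm(L_1,\dots,L_n)$. Once this characterization is invoked, the proof is a two-line application of the projection argument above. I would therefore spend most of the write-up fixing the normalization convention and stating the lclm universal property, after which both divisibilities follow immediately from the homomorphism property of the~$\pi_i$.
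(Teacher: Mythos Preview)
Your proposal is correct and, once you settle on the ``cleanest route,'' it is essentially the paper's own proof: from $L(m)=\sum_i L(m_i)=0$ with $L(m_i)\in M_i$ one gets $L(m_i)=0$, hence each $L_i$ right-divides~$L$, so $\lclm(L_1,\dots,L_n)$ right-divides~$L$; conversely the lclm clearly kills every~$m_i$ and hence~$m$. The initial mention of Lemma~\ref{LEM:rightfactor} is a red herring (that lemma concerns a quotient $M/N$, not a direct summand), and the paper dispenses with the normalization and PID discussion entirely, but the core argument is the same.
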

\begin{proof} 
Let $L_i$ be the minimal annihilator for~$m_i \in M_i$. Suppose $L$ is an annihilator for~$m$, then 
\[L(m)=L(m_1)+\cdots+L(m_n) = 0.\]
Since~$L(m_i) \in M_i$, we have~$L(m_i)=0$, which implies that $L$ is right-divisible by~$L_i$. Thus $L$ is right-divisible by $\lclm(L_1,\ldots,L_n)$. Note that $\lclm(L_1,\ldots,L_n)$ is an annihilator for~$m$. The lemma follows.
\end{proof}

\subsection{Constructing minimal prescopers}\label{SUBSECT:direct}

We now present a method to construct minimal prescopers for
hypergeometric terms. We first recall some terminologies from~\cite[Section~4]{AbramovPetkovsek2002a} and~\cite[Section~3]{Huang2016} about properties of polynomials under shifts. Let $F$ be a field of characteristic zero. Two polynomials~$q_1, q_2 \in F[z]$ are \emph{$\sigma$-equivalent} with respect to the $F$-auto\-morphism~$\sigma$ of $F[z]$
if~$q_1 =\sigma^{j}(q_2)$ for some~$j \in \bZ\setminus\{0\}$, denoted as~$q_1 \sim_{\sigma} q_2$. 
Two shift-free polynomials $b_1, b_2 \in C(n)[k]$ are \emph{shift-related} (with respect to~$k$) if
for any nontrivial monic irreducible factor $q_1$ of~$b_1$, there exists a unique monic
irreducible factor~$q_2$ of~$b_2$ with the same multiplicity as~$q_1$ in~$b_1$ such that~$q_1$ and $q_2$ are $S_k$-equivalent and vice versa. An irreducible polynomial $p\in C[n,k] $ is~\emph{integer-linear} over $C$ if
there exist a univariate polynomial $P\in C[z]$ and a nonzero vector $ (m, \ell)\in \bZ^2 $ such that $p(n,k)=P(m n + \ell k)$. A polynomial $p\in C[n, k]$ is \emph{integer-linear} if all of its irreducible factors are integer-linear.

By the existence criterion on telescopers~\cite{Abramov2003}, a hypergeometric term $H$ as in~\eqref{EQ:adddecomp}
has a nonzero telescoper in $A_n$ if and only if $b$ is an integer-linear polynomial.
From now on, we always assume that the given hypergeometric term $H$ has a nonzero
telescoper. Since $b$ is integer-linear, shift-free in~$k$, and strongly prime with $K$, we can decompose $b$ as
\[
b = \prod_{i=1}^I \prod_{j=0}^{\ell_i -1} S_k^{\mu_{i,j}}\bigl(P_i(m_i n+ \ell_i k + j)\bigr)^{\lambda_{i, j}},
\]
where each $P_i\in C[z]$ is irreducible, $\lambda_{i, j}\in \bN$ and $m_i, \ell_i, \mu_{i,j} \in \bZ$ satisfying~$\ell_i > 0$, $\gcd(m_i, \ell_i) =1$, and $S_k^{\mu_{i,j}}\bigl(P_i(m_i n+ \ell_i k +j)\bigr)$ is strongly prime with~$K$.
Moreover, one can ensure that for all $i, i'\in \{1, \ldots, I\}$ with $i\neq i'$, at least one of the following three relations is not satisfied:
\begin{equation}\label{COND:mellP}
m_i = m_{i'} , \, \ell_i = \ell_{i'},\,  \text{and}\, \, P_i \sim_{S_z} P_{i'}.
\end{equation}
Let $\lambda_i := \max\{\lambda_{i, 0}, \ldots, \lambda_{i, \ell_i-1}\}$
 and set
 \[B_{i,j} := S_k^{\mu_{i,j}}\bigl(P_i(m_i n+ \ell_i k + j)\bigr)^{\lambda_i}.\]
Then we can write $a/b$ in the residual form of $H$ as
\begin{equation}\label{EQ:decompa/b}
    \frac{a}{b} = \sum_{i=1}^{I}\sum_{j=0}^{\ell_i -1} \frac{q_{i,j}}{B_{i,j}},
\end{equation}
where $q_{i,j} \in C(n)[k]$ such that~$\deg_k(q_{i,j})<\deg_k(B_{i,j})$. 
Let~$\hat{H} = a/b \cdot H_0$. By Definition~\ref{DEF:prescoper}, the minimal prescoper for~$H$ is equal to the minimal prescoper for~$\hat{H}$.
From the above decomposition we obtain
\[\hat H =\sum_{i=1}^{I} \hat H_i \quad \text{with}\quad \hat H_i := \sum_{j=0}^{\ell_i -1} \frac{q_{i,j}}{B_{i,j}}
\cdot H_0. \]

\begin{lem}\label{LEM:prescoperlclm} 
The minimal prescoper for $\hat H$ is the least common left multiple of the minimal prescopers for the~$\hat H_i$'s.
\end{lem}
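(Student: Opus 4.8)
The plan is to realize the minimal prescoper as the minimal annihilator of an element in the quotient module $M/N$ and then reduce to Lemma~\ref{LEM:lclm} by exhibiting a direct-sum decomposition of $M/N$ that separates the summands $\hat H_i$. By Definition~\ref{DEF:prescoper}, an operator $R\in A_n$ is a prescoper for $\hat H$ precisely when $R$ annihilates the image $\overline{\hat H}+N$ of $\hat H$ in $M/N$; hence the minimal prescoper for $\hat H$ is the minimal annihilator of $\overline{\hat H}+N$, and likewise for each $\hat H_i$. Since $\overline{\hat H}+N=\sum_{i=1}^I(\overline{\hat H_i}+N)$, it suffices to produce $C(n)$-subspaces $M_i\subseteq M/N$ that are $A_n$-submodules, with $\overline{\hat H_i}+N\in M_i$ and $\sum_i M_i=\bigoplus_i M_i$, and then to invoke Lemma~\ref{LEM:lclm}.

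For each $i$ I would let $M_i\subseteq M/N$ be the $C(n)$-subspace spanned by the classes of all \emph{type-$i$} residual forms, i.e.\ those of the shape $\frac{q}{B_{i,j}}H_0$ with $0\le j<\ell_i$ and $\deg_k(q)<\deg_k(B_{i,j})$. Trivially $\overline{\hat H_i}+N\in M_i$. The first key point is that each $M_i$ is $S_n$-invariant, hence an $A_n$-submodule. Applying $S_n$ to $\frac{q}{B_{i,j}}H_0$ both shifts the argument $m_i n+\ell_i k+j$ inside $B_{i,j}$ into $m_i n+\ell_i k+(j+m_i)$ and multiplies by the rational factor $S_n(H_0)/H_0$. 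Reducing the index $j+m_i$ modulo $\ell_i$ rewrites the shifted factor as an $S_k$-translate of a type-$i$, index-$j'$ factor $P_i(m_i n+\ell_i k+j')$, and the $S_k$-translate is absorbed modulo $\Delta_k(\Omega)$; the spurious denominator contributed by $S_n(H_0)/H_0$ is coprime to $B_{i,j}$ because $B_{i,j}$ is strongly prime with $K$, so after reducing back to residual form via the Abramov--Petkov\v{s}ek reduction of \eqref{EQ:adddecomp} and \cite[Proposition~5.2]{Huang2016} it lands in the $p/v$ part and is killed in $M/N$. The outcome is again a $C(n)$-combination of type-$i$ classes, so $S_n(M_i)\subseteq M_i$.

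The second, and main, point is that the sum $\sum_i M_i$ is direct. Here I would use the separation condition \eqref{COND:mellP}: for distinct $i\ne i'$ the factors $P_i(m_i n+\ell_i k+j)$ and $P_{i'}(m_{i'} n+\ell_{i'} k+j')$ fail to be $S_k$-equivalent, so the denominators $B_{i,j}$ and $B_{i',j'}$ are coprime and lie in distinct $S_k$-orbits. Since the residual-form representative of a class in $M$ is determined by its $a/b$ part (the ``$a/b$'' term is unique modulo summability, with $b$ shift-free and strongly prime with $K$), and since in $M/N$ the $p/v$ contribution is additionally killed, the type-decomposition \eqref{EQ:decompa/b} of the $a/b$ part is unique. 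Consequently a relation $\sum_i w_i=0$ with $w_i\in M_i$ forces each type-$i$ contribution to vanish separately, giving $\sum_i M_i=\bigoplus_i M_i$.

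With the decomposition in hand, Lemma~\ref{LEM:lclm} applied to the element $\overline{\hat H}+N=\sum_i(\overline{\hat H_i}+N)$ of the $A_n$-module $\bigoplus_i M_i$ shows that the minimal annihilator of $\overline{\hat H}+N$ is the least common left multiple of the minimal annihilators of the $\overline{\hat H_i}+N$; translating back through Definition~\ref{DEF:prescoper} yields the claim. The step I expect to be most delicate is the $S_n$-invariance of the $M_i$ together with the directness of the sum: both hinge on controlling how the Abramov--Petkov\v{s}ek reduction redistributes a shifted denominator among the orbit representatives $j=0,\dots,\ell_i-1$ without leaking into another type, which is exactly what condition \eqref{COND:mellP} and strong-primality with $K$ are designed to guarantee.
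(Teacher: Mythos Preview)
Your proposal is correct and follows essentially the same route as the paper: define type-$i$ subspaces of $M/N$, show they are $A_n$-submodules (the paper invokes \cite[Proposition~5.4]{Huang2016} rather than Proposition~5.2), show their sum is direct via the uniqueness/shift-relatedness of residual forms (the paper invokes \cite[Proposition~3.2]{Huang2016}), and apply Lemma~\ref{LEM:lclm}. If anything, your directness argument is stated more carefully than the paper's, which only spells out $V_i\cap V_j=\{0\}$ pairwise whereas you argue the full relation $\sum_i w_i=0\Rightarrow w_i=0$.
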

\begin{proof}
Let $V_i \subseteq M/N$ be the set that consists of the elements
\[\sum_{j=0}^{\ell_i -1} \frac{a_{i,j}}{B_{i,j} } H_0 + N
\]
with $a_{i,j} \in C(n)[k]$ and $\deg_k(a_{i,j}) < \deg_k(B_{i,j})$. 
By~\cite[Proposition~5.4]{Huang2016}, for any~$d\in \bN$, there exist~$\widetilde{a}_{i, j} \in C(n)[k]$ with $\deg_k(\widetilde{a}_{i, j}) < \deg_k(B_{i,j})$ and $ p_d \in W_K$ such that
    \[
    S_n^d\Biggl( \sum_{j=0}^{\ell-1}\frac{a_{i, j}}{B_{i,j}} H_0\Biggr)\equiv
    \Biggl(\sum_{j=0}^{\ell-1}\frac{\widetilde{a}_{i, j}}{B_{i,j}}+\frac{p_d}{v}\Biggr)H_0 \bmod \Delta_k(\Omega).
    \] 
    This implies that $V_i$ is an
$A_n$-submodule of~$M/N$. Let~$V = \sum_{i=1}^I V_i$. Then~$\hat{H} + N$ is an element of~$V$. 
By Lemma~\ref{LEM:lclm}, it remains to show that $V= \bigoplus_{i=1}^I V_i$.
 By~\cite[Proposition~3.2]{Huang2016} the following holds: if there exist~$p_1, p_2 \in W_K$ such that
    \[\Bigl(\frac{a_1}{b_1} + \frac{p_1}{v}\Bigr) H_0 \equiv \Bigl(\frac{a_2}{b_2} + \frac{p_2}{v}\Bigr) H_0 \bmod \Delta_k(\Omega), \]
    where $b_1,b_2$ satisfy the conditions as in Equation~\eqref{EQ:adddecomp}, then $b_1$ and $b_2$ are shift-related to each other. Hence $ V_i \cap V_j = \lbrace 0 \rbrace$ for any~$i \neq j$.
\end{proof}

We next deal with the question how to compute the minimal prescoper for each~$\hat H_i$.
For each~$d \in \bN $, the modified Abramov--Petkov\v{s}ek reduction~\cite{ChenHuangKauersLi} decomposes 
\[S_n^d(\hat{H}_i) \equiv \left( r_{i,d} + \frac{p_{i,d}}{v} \right)H_0 \bmod \Delta_k(\Omega), \]
where~$p_{i,d} \in W_K$ and $r_{i,d} \in C(n,k)$, which are also contained in a finite-dimensional $C(n)$-vector space. Take the minimal~$\rho_i \in \bN$ s.t. $\sum_{d=0}^{\rho_i} e_{i,d} r_{i,d} = 0$ with~$e_{i,d} \in C(n)$ and~$e_{i,\rho_i} =1$. Then we have 
\[R_i := \sum_{d=0}^{\rho_i} e_{i,d}S_n^d \]
is the minimal prescoper for~$\hat H_i$.

For a rational function $f\in C(n, k)$ of the form
\[f = \frac{1}{(mn + \ell k)^s}, \]
where $s$ is a positive integer and $m, \ell\in \bZ$ with $\ell\neq 0$ and $\gcd(m, \ell)=1$,
one can observe that $S_n^{\ell} -1$ is the minimal telescoper for $f$. Based on this observation, Le~\cite{Le2003} gave a direct
method for computing minimal telescopers for rational functions which avoids the process of item-by-item examination of the order
of the ansatz operators in Zeilberger's algorithm. Motivated by van Hoeij's example in~\cite[Section~3]{VANHOEIJ2025102342}, we partially extend Le's direct method to special hypergeometric terms of the form
\begin{equation}\label{EQ:specialbase}
    H = \frac{q(n,k)}{(mn+\ell k+\alpha)^{\lambda}}\cdot H_0,
\end{equation}
where~$\alpha \in C$,~$\deg_k(q) < \lambda$, $\gcd(q, mn+ \ell k + \alpha) =1$, and~$(mn+\ell k + \alpha)$ is strongly prime with~$K$.
For a nonzero operator $R\in C(n)\langle S_n\rangle$  and a positive integer~$\ell \in \bN$, we can write $R$ as
\begin{equation}\label{EQ:expsep}
    R = R_0 + \cdots + R_{\ell -1},
\end{equation}
with $R_i\in S_n^i\cdot C(n)\langle S_n^\ell\rangle$. This decomposition is called the~\emph{$\ell$-exponent separation} of~$R$, see~\cite[Section~4]{Chen2016}.
\begin{lem}\label{LEM:expsepR}
    Let~$H$ be as in~\eqref{EQ:specialbase} and let $R$ have the~$\ell$-exponent separation as in~\eqref{EQ:expsep}. If $R$ is the minimal prescoper for $H$, then~$R=R_0$.
\end{lem}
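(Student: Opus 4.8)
The plan is to combine the $\ell$-exponent separation with the $S_k$-orbit structure of the single denominator $mn+\ell k+\alpha$ and with the invertibility of $S_n$ modulo $k$-summability. First I would record how $S_n^d$ moves $H$. Applying $S_n^d$ to~\eqref{EQ:specialbase} gives
\[
S_n^d(H)=\frac{S_n^d(q)\,\kappa_d}{(mn+\ell k+\alpha+md)^{\lambda}}\,H_0,\qquad \kappa_d:=\frac{S_n^d(H_0)}{H_0}\in C(n,k).
\]
Since $mn+\ell k+\alpha$ is strongly prime with the kernel~$K$, the modified Abramov--Petkov\v{s}ek reduction of~\cite{ChenHuangKauersLi} only performs $\Delta_k$-shifts and absorbs the entire $\kappa_d$-contribution into a kernel term $\frac{p_d}{v}H_0\in N$, just as in the proof of Lemma~\ref{LEM:prescoperlclm}; hence modulo $\Delta_k(\Omega)$ and $N$ the class $\overline{S_n^d(H)}$ is represented by a residual form whose denominator lies in the $S_k$-orbit of $(mn+\ell k+\alpha+md)^{\lambda}$. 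Because $S_k^{t}(mn+\ell k+\alpha+md)=mn+\ell k+\alpha+md+\ell t$, two such denominators are $S_k$-equivalent exactly when $m(d_1-d_2)\equiv 0 \pmod{\ell}$, which by $\gcd(m,\ell)=1$ happens exactly when $d_1\equiv d_2\pmod{\ell}$. So only $\ell$ orbits occur, and the orbit of $\overline{S_n^d(H)}$ depends only on $d\bmod \ell$.

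Next I would set, for $r=0,\dots,\ell-1$, the $C(n)$-subspace $W_r\subseteq M/N$ of classes represented by residual forms supported on the $S_k$-orbit of $(mn+\ell k+\alpha+mr)^{\lambda}$, so that $\overline{S_n^d(H)}\in W_{d\bmod \ell}$. The hard part will be to show that $W_0,\dots,W_{\ell-1}$ are in direct sum in $M/N$, i.e.\ that residual forms carried by distinct $S_k$-orbits are $C(n)$-independent modulo~$N$. This is the orbit separation behind discrete residues~\cite{ChenSinger2012}, and I expect it to follow from the uniqueness of residual forms together with~\cite[Proposition~3.2]{Huang2016} (residual forms congruent, up to a kernel term, must have shift-related denominators), exactly as $V_i\cap V_j=\{0\}$ was obtained in the proof of Lemma~\ref{LEM:prescoperlclm}. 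Granting the direct sum, I write $R=R_0+\cdots+R_{\ell-1}$ with $R_r=\sum_{d\equiv r}c_dS_n^d$; then $R_r(\overline H)\in W_r$ while $R(\overline H)\in N$ forces $R_r(\overline H)\in N$ for each $r$, so every nonzero $R_r$ is again a prescoper. As the prescopers form the principal left ideal $\operatorname{Ann}(\overline H)\subseteq A_n$ generated by the minimal prescoper $R$, each nonzero $R_r$ is a left multiple of $R$; comparing orders (each $R_r$ has order at most $\ord(R)$) gives $R_r\in C(n)^{*}R$ with $\ord(R_r)=\ord(R)$. Since $\ord(R_r)\equiv r\pmod{\ell}$, only the class $r_0=\ord(R)\bmod \ell$ can occur, so all but one part vanish and $R=R_{r_0}$.

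Finally I would eliminate the case $r_0\neq 0$ via the invertibility of $S_n$. As $H$ is a nonzero hypergeometric term, $S_n$ acts bijectively on $\Omega=C(n,k)\cdot H$; commuting with $\Delta_k$ it descends to a bijection of $M$, and since $S_n$ maps the finite-dimensional $C(n)$-space $N$ injectively and $C(n)$-semilinearly into itself, it maps $N$ onto $N$; hence $S_n$ is a bijection of $M/N$. Now $R=R_{r_0}$ uses only powers $S_n^d$ with $d\equiv r_0\pmod \ell$; letting $e\ge r_0$ be its lowest power I factor $R=S_n^{e}R''$ with $R''\in C(n)\langle S_n^{\ell}\rangle$ of nonzero constant term. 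If $e\ge 1$, then $S_n^{e}\bigl(R''(\overline H)\bigr)=R(\overline H)=0$ in $M/N$, and injectivity of $S_n$ on $M/N$ gives $R''(\overline H)=0$, making $R''$ a nonzero prescoper with $\ord(R'')=\ord(R)-e<\ord(R)$, contradicting the minimality of $R$. Therefore $e=0$, so $r_0=0$ and $R=R_0$. The main obstacle is the direct-sum claim of the second paragraph, where the strong-primality hypothesis and the normal-form uniqueness results do the real work; everything else is bookkeeping with the exponent separation plus the bijectivity of $S_n$.
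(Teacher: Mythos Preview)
Your argument is correct and matches the paper's proof: both use that the $\ell$ shifted denominators $(mn+\ell k+\alpha+mr)^\lambda$, $r=0,\dots,\ell-1$, lie in distinct $S_k$-orbits (hence the direct sum in $M/N$ via~\cite[Proposition~3.2]{Huang2016}) to conclude that each nonzero $R_i$ is itself a prescoper, and then combine minimality with the invertibility of $S_n$ on $M/N$ to force $R=R_0$. The only difference is order and detail: the paper first uses closure of $N$ under $S_n^{-1}$ to get a nonzero trailing coefficient (so $R_0\neq 0$) and then eliminates the remaining $R_i$ by minimality, whereas you first isolate a single surviving component $R_{r_0}$ via the principal left ideal of prescopers and then use bijectivity of $S_n$ on $M/N$ (which you justify more explicitly than the paper does) to force $r_0=0$.
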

\begin{proof}
     Note that any two polynomials in~$\bigl\{ S_n^{i}(mn+\ell k + \alpha)^\lambda \bigr\}{}_{i=0}^{\ell-1} $ are not~$S_k$-equivalent, but for all $j\in \bN$ we have that~$S_n^{j\ell}(mn+\ell k + \alpha)^\lambda$ and $ (mn+\ell k+\alpha)^\lambda$ are $S_k$-equivalent. Then~$\bigl\{ R_i(H)+ \Delta_k(\Omega)\bigr\}{}_{i=0}^{\ell-1} $ is linearly independent over~$C(n)$ modulo~$N$. If $R$ is the minimal prescoper for~$H$, then 
     \[
       R(H) + \Delta_k(\Omega) = \sum_{i=0}^{\ell-1} \bigl( R_i(H) +  \Delta_k(\Omega)\bigr) \in N,
     \]
     which implies that $R_i(H) +  \Delta_k(\Omega) \in N$, i.e., each $R_i$ is a prescoper for~$H$. Since~$N$ is also closed under~$S_n^{-1}$, the trailing coefficient of~$R$ is nonzero, which leads to~$R_0 \neq 0$. For~$i \neq j$, we have~$\ord(R_i) \neq \ord(R_j)$, unless both are zero. We deduce that actually~$R_i = 0$ for each~$i = 1,\ldots,\ell-1$, because otherwise we could find some prescoper~$R_j$ with order less than~$\ord(R)$.
\end{proof}
Using Lemma~\ref{LEM:expsepR}, we now present a recursive algorithm according to the value $\lambda$ for computing the minimal prescoper for $H$ as in~\eqref{EQ:specialbase}. 
Since $S_n^{\ell}S_k^{-m}$ fixes the linear form $(mn+\ell k+\alpha)$, we have
\begin{equation}\label{EQ:defr1}
    h = \frac{S_n^{\ell}S_k^{-m}(H)}{H}=\frac{S_n^{\ell}S_k^{-m}(qH_0)}{qH_0} \in C(n, k).
\end{equation}
Since $\gcd(q, mn+ \ell k + \alpha) =1$ and $(mn+\ell k+\alpha)$ is strongly prime with~$K$, the evaluation of $h$ at
$k = -mn/\ell-\alpha/\ell$, assigned to $r\in C(n)$, is well-defined.

For~$\lambda=1$, we have~$q \in C(n) $. It can be decomposed into
\[
    S_n^\ell(H) \equiv S_n^\ell S_k^{-m}(H) \equiv \biggl(\frac{r\cdot q}{mn+\ell k + \alpha}+ \frac{p'}{v} \biggr) H_0 \bmod \Delta_k(\Omega),
\]
for some~$p' \in W_K$.
Then~$(S_n^\ell -r)\cdot H + \Delta_k(\Omega) \in N$. 
By Lemma~\ref{LEM:expsepR},  we have that $R := S_n^\ell -r$ is the minimal prescoper for~$H$.

For $\lambda>1$, we let~$\widetilde{A}_n := C(n) \langle S_n^\ell\rangle$ which is a subring of~$A_n$ and let $M_{i}$ be the set consisting of the elements 
\[\left(\frac{a}{(mn+\ell k + \alpha)^i} + \frac{p}{v}\right) H_0 + \Delta_k(\Omega) \]
where $ a\in C(n)[k]$ with $\deg_k(a) < i$ and $p \in W_K$. We claim that $M_{i}$ is a
$\widetilde{A}_n$-submodule of~$M$. Indeed, for any~$H_i + \Delta_k(\Omega) \in M_i$ and~$j \in \bN$,
        \[S_n^{j\ell}(H_i) \equiv S_n^{j\ell}S_k^{-jm}(H_i) \equiv \biggl(\frac{a'}{(mn+\ell k + \alpha)^i} + \frac{p'}{v}\biggr) H_0 \bmod \Delta_k(\Omega), \]
        for some~$a' \in C(n)[k]$ with~$\deg_k(a') < i$ and~$p' \in W_K$. By definition, we have~$N\subseteq M_i$ and~$M_{i-1}$ is an $\widetilde{A}_n$-submodule of~$M_i$. 
By the modified Abramov--Petkov\v{s}ek reduction, we can decompose $H$ into
        \[S_n^\ell(H) \equiv S_n^\ell S_k^{-m}(H) \equiv \left( \frac{r\cdot q}{(mn+\ell k + \alpha)^{\lambda}} \right) H_0 + \widetilde H \bmod \Delta_k(\Omega),\]
        where $\widetilde H + \Delta_k(\Omega) \in M_{\lambda-1}$. Then~$ (S_n^{\ell}-r)\cdot H + \Delta_k(\Omega) \in M_{\lambda-1}$. Since
        $R:=S_n^{\ell}-r$ is of order $1$ in~$\widetilde{A}_n$ and $H + \Delta_k(\Omega) \notin M_{\lambda-1}$, it is the minimal annihilator for~$ H +  \Delta_k(\Omega)  + M_{\lambda -1}\in M_{\lambda}/M_{\lambda-1}$. We can recursively compute the minimal prescoper $\widetilde{L}$ for $\widetilde H$. By Lemma~\ref{LEM:rightfactor}, we have $\widetilde{L}\cdot R$ is the minimal prescoper for~$H$.

The following example, sent to us by Hui Huang, indicates that the above method outperforms
the existing codes for Zeilberger's algorithm in Maple and the reduction-based method in~\cite{ChenHuangKauersLi}.
\begin{exam}
    Consider the hypergeometric term
    \[H = \frac{1}{2n+k} H_0 \quad \text{with $H_0 = 
    \frac{\binom{5n}{3k}^2}{\binom{n}{k}}$}.\]
    Then the shift-quotient with respect to $k$ is 
    \[K=\frac{S_k(H_0)}{H_0} = \frac{(3k-5n)^2(3k-5n+1)^2(3k-5n+2)^2}{9(n-k)(k+1)(3k+1)^2(3k+2)^2},\]
    which is already shift-reduced in $k$.
    Let $v$ be the denominator of $K$ and 
    \[N = \left\lbrace \frac{p}{v} H_0 + \Delta_k(\Omega) \ \middle| \ p \in W_K \subset \bQ(n)[k] \right\rbrace .\]
    Observe that $H \notin N$. Evaluating ${S_n S_k^{-2}(H)}/{H}$ at $k=-2n$ yields
    \begin{align*}
    r = \frac{3(3n+1)(3n+2)\prod_{i=1}^{5}(5n+i)^2 \prod_{i=0}^{5}(6n+i)^2}{2n(2n+1)\prod_{i=1}^{11}(11n+i)^2}.
    \end{align*}
    Then $R = S_n-r$ is the minimal prescoper for $H$. It remains to compute the minimal telescoper for 
    \[\widetilde{H} := (S_n-r)\cdot H,   \]
    which is of order 6. It takes 13 seconds on a Dell Optiplex 7090 (CPU 3.70GHz, RAM 128G) with the reduction-based method in~\cite{ChenHuangKauersLi}, compared with 31 seconds with the Maple code for Zeilberger's algorithm.
\end{exam}

Note that in this example, $H_0$~is not defined for all integers. This is a bit uncommon, but it is 
not so uncommon that the certificates have poles at some integer points. 
This also happens in some of the examples discussed below. 
Algorithms and theory for creative telescoping typically ignore this issue and leave it to the user to 
check that everything makes sense. Noteworthy exceptions include the careful study of Abramov and 
Petkov\v sek~\cite{abramov06,abramov05} as well as the approach of Bostan, Lairez, and Salvy~\cite{BostanLairezSalvy16}.

\subsection{Automorphisms of the kernel submodule}

In his paper~\cite{VANHOEIJ2025102342}, van Hoeij presents examples in which a symmetry of a summation problem
translates into an automorphism of the submodule~$N$. The eigenspaces of the automorphism give
rise to a decomposition of $N$ into submodules, and this decomposition explains why the minimal
telescoper is not the minimal annihilating operator of the sum.

Automorphisms of $N$ can be found algorithmically. By Lemma~\ref{THM:submoduleN}, the $A_n$-module
$N$ has a finite dimension as $C(n)$-vector space. Let $\lbrace v_1, \ldots, v_d \rbrace $ be a vector space basis.
Any $A_n$-automorphism $\phi \colon N \rightarrow N$ is in particular a $C(n)$-linear map. As such, it 
can be written in the form 
\begin{equation}\label{EQ:defPhi}
    \left(
        \begin{matrix}
            \phi(v_1) \\ \vdots \\ \phi(v_d)
        \end{matrix}
    \right) = \Phi 
    \left(
        \begin{matrix}
            v_1 \\ \vdots \\ v_d
        \end{matrix}
    \right),
\end{equation}
for a certain matrix $\Phi \in C(n)^{d\times d}$. The requirement for a linear map to be an $A_n$-module
automorphism is that it is invertible and compatible with the shift. If $\Sigma \in C(n)^{d \times d}$ is defined by
\begin{equation}\label{EQ:defS}
    \left(
        \begin{matrix}
            S_n(v_1) \\ \vdots \\ S_n(v_d)
        \end{matrix}
    \right) = \Sigma 
    \left(
        \begin{matrix}
            v_1 \\ \vdots \\ v_d
        \end{matrix}
    \right), 
\end{equation}
then the latter requirement means that the commutation rule $\Sigma\Phi=S_n(\Phi)\Sigma$ must hold.

In order to find automorphisms, we can therefore make an ansatz with undetermined coefficients for the entries
of~$\Phi$. The requirement $\Sigma\Phi=S_n(\Phi)\Sigma$ leads to a coupled system of linear recurrence equations
for the undetermined coefficients. This system can be solved using the command \texttt{SolveCoupledSystem}
of Koutschan's Mathematica package \texttt{HolonomicFunctions}~\cite{Koutschan20210}. The result is a $C$-linear subspace
of $C(n)^{d\times d}$. Automorphisms correspond to all the matrices in this space whose determinant is nonzero.

\begin{exam}
    For the hypergeometric term~$H_0 := \binom{n}{2k}^2$, the kernel module~$N$ computed by the modified Abramov--Petkov\v{s}ek reduction is a $C(n)$-vector space of dimension~$3$, given by the following basis:
    \[
      \bigg\lbrace \frac{k^i}{4(2k+1)^2(k+1)^2} H_0 + \Delta_k(\Omega)
      \mathrel{\bigg|} i = 0,1,2 \bigg\rbrace.
    \]
    The matrix~$\Sigma\in C(n)^{3\times3}$ is determined by~$N$, but it is too large to display it here.
    We make an ansatz for $\Phi := (\phi_{i,j})_{1\leq i,j\leq 3}$ with undetermined entries~$\phi_{i,j}$. Then the condition $\Sigma\Phi=S_n(\Phi)\Sigma$ yields a $9\times9$ coupled first-order linear system of difference equations, whose rational solutions are computed with the command \texttt{SolveCoupledSystem}.
    It returns a two-dimensional solution space over the constant field~$C$, which is spanned by the identity matrix~$I$ and by the matrix
    \[
      \Psi=\frac{
      \begin{psmallmatrix}
        12 n^3+16 n+64
        & -64 n^2+32 n+192
        & 128 n+192 \\[2pt]
        4 n^4-2 n^3+4 n^2-8 n-48 
        & -20 n^3+16 n^2-16 n-128 
        & 32 n^2-16 n-96 \\[2pt]
        n^5-2 n^4+4 n^3+32 
        & -4 n^4+16 n^3-16 n^2-32 n+64 
        & 4 n^3-40 n^2-48 n+32
      \end{psmallmatrix}}{4(n+2)^3}.
    \]
    The matrix~$\Psi$ corresponds to the automorphism $(n,k)\to(n,k+1/2)$, and it satisfies $\Psi^2=I$, as expected. By inspecting the symmetry of~$H_0$, one could anticipate the existence of another automorphism, namely $(n,k)\to(n,n/2-k)$. However, it turns out that this map is not compatible with $S_n$ and hence is not an $A_n$-module automorphism.
\end{exam}

\subsection{Zero-sum submodules}\label{SUBSECT:zerosum}
The submodule approach introduced by van Hoeij~\cite{VANHOEIJ2025102342} can not only speed-up the computation of minimal telescopers, but also explain (by examples) why the minimal telescoper for a hypergeometric sum may not be its minimal recurrence. 
The explanation of the non-minimality phenomenon by anti-symmetry has been given in~\cite{Chyzak2000ANM, Paule1994,Paule2021,PauleRiese1997,PauleSchorn1995} that leads to the method of creative symmetrizing~\cite{Le2002CSM}.  
A concrete example is the identity
\[\sum_{k=0}^{2n+1} (-1)^k\binom{2n+1}{k}^2 = 0. \]
The summand $H := (-1)^k\binom{2n+1}{k}^2$ satisfies the anti-symmetry relation 
\[H(n, k) = - H(n, 2n+1-k).\]
So summing $H$ for $k$ from $0$ to $2n+1$ leads to zero.  The minimal telescoper for $H$ is the first-order operator~$S_n + 8(n+1)/(2n+3)$, but the minimal recurrence for the above vanishing sum is
any nonzero element of $C(n)$.

As a research question, van Hoeij~\cite[Section~7]{VANHOEIJ2025102342} proposed  to study the zero-sum submodules, especially how to detect and find such submodules. We call $Z\subseteq N$ a zero-sum submodule if it only contains terms whose summation with respect to~$k$ gives~$0$. Note that every operator~$T$ with $T(\overline H)\in Z$ is then an annihilating operator of $\sum_kH$, but not necessarily a telescoper.

The following two examples show how the techniques from the previous sections, especially Nicole’s lemma, can be used to construct zero-sum submodules and explain the non-minimality phenomenon. In the first example, we find that the minimal prescoper~$R$ maps $\overline H$ not only into~$N$ but even into~$Z$. It is therefore an annihilator of the sum. However, since $R(\overline H)\neq0\in M$, it is not a telescoper. In the second example, the minimal prescoper is $R=1$. Nevertheless, the minimal telescoper is not the minimal annihilator of the sum because it turns out that there is an operator~$T$ with $T(\overline H)\in Z$ but $T(\overline H)\neq0$.

\begin{exam}
The minimal telescoper for the hypergeometric term
\[
  H :=(-1)^k \binom{3n+1}{k} \binom{3n-k}{n}^3
\]
is of order~$2$, which is not the minimal recurrence satisfied by the sum
\[
  \sum_{k=-\infty}^{+\infty} H(n, k) = 1.
\]
To explain this non-minimality, we let $H_0 = (k-3n-1)H$ and let
\[
  K := \frac{S_k(H_0)}{H_0} = \frac{(k-2n)^3}{(k+1)(k-3n)^2}=: \frac{u}{v}.
\]
Then the algorithm in Section~\ref{SUBSECT:direct} can compute the minimal prescoper $R=S_n -1$ for $H$ so that $R(H) + \Delta_k(\Omega)$ is in the submodule
\[
  N := \Bigl\{ \frac{p}{v}\cdot H_0 + \Delta_k(\Omega) \mathrel{\Big|} p \in W_K \Bigr\},
\]
where $W_k$ has a $\bQ(n)$-basis $\lbrace 1, k^3 \rbrace$. 
We now use Nicole's lemma in Section~\ref{SUBSECT:ratsum} to show that for all $p\in \bQ(n)[k]$ with $\deg_k(p) \leq 2 $, we have the vanishing-sum identity 
\[
  \sum_{k=-\infty}^{+\infty} \frac{p}{v}\cdot H_0 =0,
  \quad \text{where $n\geq 1$}.
\]
Similar to Examples~\ref{EX:euler} and \ref{EX:vanisum}, we consider the rational function
\begin{align*}
    f & = \frac{P}{Q} = \frac{p(n,-x)(3n+1)!(x+3n-1)^2\cdots (x+2n+1)^2}{(n!)^3(x-1)x(x+1)\cdots(x+2n)} \\
      & = \sum_{k=-1}^{2n} \frac{\alpha_k}{x+k}.
\end{align*}
    Since $Q$ is squarefree,  Lagrange's residue formula implies that
\[
        \alpha_k  = \frac{p(n,k)(3n+1)!(3n-k-1)^2\cdots(2n-k+1)^2}{(n!)^3(-k-1)(-k)(-k+1)\cdots(-k+2n)} = \frac{p}{v}\cdot H_0.
\]
    By Lemma~\ref{THM:nicole}, $f$ is summable in $\bC(x)$ since $\deg_k(p) \leq 2$.  Then the above vanishing-sum identity holds. By this identity, we have a zero-sum submodule
    \[Z := \left \lbrace \frac{p}{v}\cdot H_0 + \Delta_k(\Omega) \ \middle| \ p \in W_K\,  \text{with}\,  \deg_k(p) = 0 \right\rbrace.\]
Applying the prescoper $R = S_n - 1$  to $H$ yields
    \[S_n(H)-H \equiv \frac{37n^7+96n^6+81n^5+22n^4}{8(n+1)^3(9n^2+10n+3)} \frac{H_0}{v}\bmod \Delta_k(\Omega).\]
   So $S_n(H)-H + \Delta_k(\Omega)\in Z$ which contributes zero to the sum.  Then $S_n - 1$ is the minimal annihilator for the sum $\sum_{k=-\infty}^{+\infty} H(n, k)$. 
 \end{exam}

\begin{exam}
We now explain why minimal telescopers overshoot in the following combinatorial identity
\[\sum_{k=0}^n(-1)^k \binom{n}{k} \binom{3k}{n}  = (-3)^n.\]
This is a special case of the identity  in~\cite[Section~4.3]{PauleSchorn1995}
which was originally used to show the non-minimality phenomenon
with explanations in~\cite{BostanLairezSalvy16, Paule2021}.
The minimal telescoper for the summand~$H:= (-1)^k \binom{n}{k} \binom{3k}{n} $ is 
    \[S_n^2 + \frac{3(5n+7)}{2(2n+3)}S_n+\frac{9(n+1)}{2(2n+3)},\]
but this is not the minimal recurrence $S_n+3$ satisfied by the sum. In this example, we let $H_0 = H$ and 
    \[K := \frac{S_k(H_0)}{H_0} = \frac{3(k-n)(3k+1)(3k+2)}{(3k-n+1)(3k-n+2)(3k-n+3)} =: \frac{u}{v}.\]
The corresponding kernel submodule is
\[
  N := \Bigl\{ \frac{p}{v}\cdot H_0 + \Delta_k(\Omega) \mathrel{\Big|} p \in W_K \Bigr\},
\]
where $W_k$ has a $\bQ(n)$-basis $\{ 1, k^3\}$. Since $H+ \Delta_k(\Omega) \in N$, the minimal prescoper of~$H$ is $R=1$. Similar to the previous example, considering the rational function 
\[f = \frac{p(n,-x)(-3x)(-3x-1)\cdots(-3x-n+4)}{x(x+1)\cdots(x+n)}\]
yields the vanishing-sum identity
\[
  \sum_{k=0}^n \frac{p}{v} \cdot H_0 = 0,
  \quad \text{where $n\geq 3$},
\]
for all  $p\in \bQ(n)[k]$ with $\deg_k(p) \leq 2 $.
So we obtain the zero-sum submodule
    \[Z := \left \lbrace \frac{p}{v}\cdot H_0 + \Delta_k(\Omega) \ \middle| \ p \in W_K \,  \text{with}\, \deg_k(p) = 0 \right\rbrace.\]
    We can verify that $Z$ is closed under any operator in $A_n$. In fact,
    \[S_n\left(\frac{H_0}{v}\right) \equiv \frac{-9n^3-21n^2+36n+84}{2(n+2)(2n+5)(3n+4)}\frac{H_0}{v} \bmod \Delta_k(\Omega).\]
    The remaining task is to find an operator $T \in \bQ(n)\langle S_n \rangle$ such that $T(H) + \Delta_k(\Omega) \in Z$.
The modified Abramov--Petkov\v{s}ek reduction decomposes $H_0$ and $S_n(H_0)$ as 
    \begin{align*}
    H_0 &\equiv \frac{81k^3n-n^4+108k^3+4n^3-12n^2+12n+18}{(3n+4)\cdot v} H_0 \\
      &\bmod \Delta_k(\Omega).\\
      S_n(H_0) &\equiv \frac{-243k^3n+n^4-324k^3-9n^3+41n^2-42n-54}{(3n+4)\cdot v} H_0 \\
    & \bmod \Delta_k(\Omega).
    \end{align*}
Note that $T = S_n +3$ brings $H_0$ into the zero-sum submodule~$Z$. Therefore, $T$ annihilates the sum, and since the sum evaluates to~$(-3)^n$, we find that~$T$ is actually its minimal annihilator.
\end{exam}
\bigskip 

\noindent {\bf Acknowledgements.} 
We thank Hui Huang for sharing her example and making her Maple codes available to us. We also thank Huajun Bian and Yiman Gao for many discussions.
We also thank the anonymous referees as well as Alin Bostan and Peter Paule for pointing us to additional relevant references.

\bibliographystyle{plain}

%\bibliography{SubmoduleCT}

\end{document}